\def\BibTeX{{\rm B\kern-.05em{\sc i\kern-.025em b}\kern-.08em
    T\kern-.1667em\lower.7ex\hbox{E}\kern-.125emX}}
    \newcommand{\R}{\mathbb{R}}
\newcommand{\mcl}[1]{\mathcal{#1}}
\newcommand{\N}{\mathbb{N}}
\newcommand{\eps}{\varepsilon}
\newtheorem{defn}{Definition}
\newtheorem{thm}{Theorem}
\newtheorem{prop}{Proposition}
\begin{document}
 \title{Combining Trajectory Data with Analytical Lyapunov Functions for Improved Region of Attraction Estimation 
}
\vspace{-0.1cm}
\author{Lucas Lugnani$^{1}$;  Morgan Jones$^{2}$; Luís F. C. Alberto$^{3}$; Matthew Peet$^{4}$ and Daniel Dotta$^{1}$
\thanks{*This work is supported by the Brazilian agencies CNPQ (grant
170100/2018-9), São Paulo Research Foundation (FAPESP) (grants
2016/08645-9, 2018/07375-3, 2018/20104-9 and 2019/10033-0), Engie (grant PD-00403-0053) and NSF CMMI-1933243.}
\thanks{$^{1}$Lucas Lugnani and Daniel Dotta are with Faculty of Electrical and Computer Engineering, University of Campinas, Brazil
        {\tt\small lugnani@dsee.fee.unicamp.br; dottad@unicamp.br}}%
\thanks{$^{2}$Morgan Jones is with the Department of Automatic Control and Systems Engineering, University of Sheffield, UK
        {\tt\small morgan.jones@sheffield.ac.uk; }}%
\thanks{$^{3}$Luís F. C. Alberto is with the São Carlos Engineering School, University of São Paulo, São Carlos, Brazil
        {\tt\small lfcalberto@usp.br}}%
\thanks{$^{4}$Matthew Peet is with the School for the Engineering of Matter, Transport and Energy, Arizona State University, Tempe, USA
        {\tt\small mpeet@asu.edu}}%
}
\vspace{-0.3cm}

\maketitle
\thispagestyle{empty}
\pagestyle{empty}

\begin{abstract}
The increasing uptake of inverter based resources (IBRs) has resulted in many new challenges for power system operators around the world. The high level of complexity of IBR generators makes accurate classical model-based stability analysis a difficult task. This paper proposes a novel methodology for solving the problem of estimating the Region of Attraction (ROA) of a nonlinear system by combining classical model based methods with modern data driven methods. Our method yields certifiable inner approximations of the ROA, typical to that of model based methods, but also harnesses trajectory data to yield an improved accurate ROA estimation. The method is carried out by using analytical Lyapunov functions, such as energy functions, in combination with data that is used to fit a converse Lyapunov function. Our methodology is independent of the function fitting method used. In this work, for implementation purposes, we use Bernstein polynomials to function fit. Several numerical examples of ROA estimation are provided, including the Single Machine Infinite Bus (SMIB) system, a three machine system and the Van-der-Pol system.
\end{abstract}


\vspace{-0.3cm}
\section{Introduction}
\vspace{-0.2cm}
Stability analysis is of uttermost importance for the secure planning and operation of modern power systems. Of particular interest is the Transient Angular Stability of a system, defined as the ability of the system to maintain rotor angle synchronism following a disturbance and its subsequent angle excursion~\cite{kundur1994power}. Given a  Stable Equilibrium Point (SEP), the Region of Attraction (ROA), defined as the set of initial conditions for which the system tends to the SEP, provides a metric of the strength of angular stability of the system. Moreover, knowledge of the ROA can be used to provide protection parameters and limits of operation that maintain the stability and safety of the system. 
 
For general nonlinear systems, which is the case for generators connected to the grid, there does not exist an analytical expression for the ROA. In the absence of an analytical expression, there is a need for methods that can compute approximations of the ROA. Classically, methods that approximate the ROA in power systems rely on precise system models of generators and line admittances~\cite{moon2000estimating,jin2005power}. However, the increasing penetration of IBRs has resulted in more complex machine models and more dynamic operating points. For such complex systems it has become increasingly intractable to use classical model-based methods to accurately approximate the ROA~\cite{8444083}. Fortunately, the advent of Wide Area Measurement Systems (WAMS), gathering high-frequency synchrophasor data has provided new sets of system data with minimal model dependency. Some works have already explored the use of synchrophasors for ROA estimation in the literature. 



In~\cite{Aranya07}, data driven ROA estimation is realized through the application of energy function analysis, using PMU measurements that monitor tie-lines of dynamic power flows. Authors from \cite{8725530} propose an alternative method that uses Global Phase Portraits (GPPs) that contain the singularity points at infinity, providing bounds on the basins of attraction of attractor sets. In a similar vein to the data-driven methods proposed in this paper, authors from~\cite{zhai2021estimating,colbert2018using,lai2021nonlinear} use measurement data from stable trajectories to approximate converse Maximal Lyapunov Functions (LFs) and hence construct ROA estimations. Although these methodologies bring new capabilities for high dimensional stability analysis, these methods do not guarantee an inner approximation of the ROA, unlike more classical model-based methods. Notable model based methods include~\cite{anghel2013algorithmic} where the stability analysis of power systems is analyzed by constructing LFs using Sum-of-Squares (SOS) programming. Since power systems have nonlinear trigonometric terms, non-automated algebraic reconfiguration is required to use SOS. Alternatively, the works of \cite{8725530,8444083} make analytical approaches that improve upon classical energy function based methods. 

Unfortunately, it is often intractable to compute accurate ROA estimations of power systems using model based methods. On the other hand, although data based methods can provide accurate ROA estimations, they do not yield LFs and hence cannot certify inner ROA approximations. The goal of this work is to bridge the gap between the model and data based methods to yield accurate inner ROA approximations.


The main contribution of this paper, presented in Thm.~\ref{thm: two step LF}, shows how the existence of two functions, $V_1$ and $V_2$, provides a certifiable inner approximation of the ROA of a given ODE. Specifically, if $V_2$ is a LF and $V_1$ (not necessarily a LF) is decreasing along the solution map inside a ``donut"-shaped region, $\{ \hspace{-0.05cm} x  \hspace{-0.05cm} \in \hspace{-0.05cm} D \hspace{-0.05cm} : \hspace{-0.05cm} \gamma_1 \hspace{-0.1cm} \le \hspace{-0.05cm} V_1(x) \hspace{-0.1cm} \le \hspace{-0.05cm}  \gamma_2 \hspace{-0.05cm} \}$, we show that it is possible to construct an improved ROA estimation, as compared with the ROA approximation yielded by the LF, $V_2$, alone. For implementation, we find such a $V_1$ by function fitting a converse LF using trajectory data. 

\vspace{-0.3cm}
\section{Notation} \label{sec:notation}
\vspace{-0.2cm}
We denote the $\eta>0$ neighborhood of a set $S \subset \R^n$ as $B_\eta(S):=\{ y \in \R^n: \inf_{x \in S} ||x-y||_2< \eta\}$, where $||\cdot||_2$ is the euclidean norm. In the case $S=\{x\}$ then $B_\eta(x)$ becomes a ball of radius $\eta>0$ centered at $x \in \R^n$. We denote the set of all interior points of $S \subset \R^n$ by $S^\circ$.  Let $C(\Omega, \Theta)$ be the set of continuous functions with domain $\Omega \subset \R^n$ and image $\Theta \subset \R^m$. For $\alpha \in \N^n$ we denote the partial derivative $D^\alpha := \Pi_{i=1}^{n} \frac{\partial^{\alpha_i} }{\partial x_i^{\alpha_i}} $ where by convention if $\alpha=[0,..,0]^\top$ we denote $D^\alpha f(x):=f(x)$ for any function $f$. We denote the set of $i$'th continuously differentiable functions by $C^i(\Omega,\Theta):=\{f \in C(\Omega,\Theta): D^\alpha f \in C(\Omega, \Theta) \text{ } \text{ for all } \alpha \in \N^n \text{ such that } \sum_{j=1}^{n} \alpha_j \le i\}$. For $V \in C^1(\R^n, \R)$ we denote $\nabla V:= (\frac{\partial V}{\partial x_1},....,\frac{\partial V}{\partial x_n})^\top$. We denote the space of $d$-degree polynomials $p: \Omega \to \Theta$ by $\mcl{P}_d(\Omega,\Theta)$.

\vspace{-0.2cm}
\section{Stability of ODEs} \label{sec:ODE}
\vspace{-0.1cm}
\vspace{-0.1cm}
Consider a dynamical system, represented by a nonlinear ordinary differential equation (ODE) of the form
\vspace{-0.2cm}
\begin{equation} \label{eqn: ODE}
\dot{x}(t) = f(x(t)), \quad x(0)=x_0\in \R^n, \quad t \in [0,\infty)
\end{equation}
where $f: \R^n \to \R^n$ is the vector field and $x_0 \in \R^n$ is the initial condition. WLOG throughout this paper we will assume {$f(0)=0$} so the origin is an equilibrium point; a linear coordinate transformation can always be used to shift any equilibrium point to the origin.

    For simplicity in the following we assume Eq.~\eqref{eqn: ODE} is well defined. That is there exists a unique solution map $\phi_f \in C^1( \R^n \times \R^+, \R^n)$ that satisfies $\frac{\delta \phi_f(x,t)}{\delta t}= f(\phi_f(x,t))$, $\phi_f(x,0)=x$ and $\phi_f(\phi_f(x,t),s)=\phi_f(x,t+s)$. Sufficient conditions for the existence and uniqueness of a solution map, based on the smoothness properties of the vector field, can be found in standard  textbooks such as~\cite{khalil2002nonlinear}.
        Given an ODE~\eqref{eqn: ODE}, we next introduce notions of asymptotic and exponential stability that are important in showing the existence of the converse LF given later in Eq~\eqref{eq: Zubov LF}.
    \vspace{-0.2cm}
    \begin{defn}
       The equilibrium point $x=0$ of ODE~\eqref{eqn: ODE} is,
\begin{itemize}
    \item stable if, for each $\eps>0$, there exists $\delta>0$ such that 
    \begin{align} \nonumber
 ||\phi_f(x,t)||_2<\eps \text{ for all } x \in B_\delta(0) \text{ and } t \ge 0.  \end{align}
        \item asymptotically stable if it is stable and there exists $\delta>0$ such that $\lim_{t \to \infty} ||\phi_f(x,t)||_2=0 \text{ for all } x \in B_\delta(0).$
   \item exponentially stable if there exists $\lambda,\mu>0$ such that 
      \begin{align} \nonumber
         ||\phi_f(x,t)||_2<\mu e^{-\lambda t} ||x||_2 \text{ for all } x \in B_\delta(0) \text{ and } t \ge 0.
   \end{align}
\end{itemize}
    \end{defn}
        

For a given asymptotically stable ODE~\eqref{eqn: ODE} the main aim of this paper is to estimate the Region of Attraction (ROA):
\vspace{-0.5cm}
\begin{align} \label{eq:ROA}
ROA_f:=\{x\in R^n : \lim_{t \to \infty} ||\phi_f(x,t)||_2 = 0\}.    
\end{align}


 There is no universal method for analytically solving nonlinear ODEs. Thus, over the years, arguably the most commonly used method to estimate $ROA_f$ is Lyapunov's second method that indirectly estimates $ROA_f$ using Lyapunov Functions (LFs); functions that are globally non-negative that decrease along the solution map. The following theorem shows how the sublevel set of a LF can approximate $ROA_f$. In order to present the main Lyapunov theorem used in this paper we recall the definition of an invariant set.
 \begin{defn}
     A set $S \subset \R^n$ is an invariant set of ODE~\eqref{eqn: ODE} if for all $x \in S$ we have $\phi_f(x,t) \in S$ for all $t \ge 0$.
 \end{defn}
\vspace{-0.3cm}
\begin{thm}[LaSalle's Invariance Principle~\cite{khalil2002nonlinear}] \label{thm: LF subset of ROA}
Consider an ODE~\eqref{eqn: ODE} defined by some vector field $f \in C^1(\R^n, \R^n)$. Suppose there exits $V \in C^1(D, \R)$ and a compact { invariant set} $S \subseteq D$ such that
\vspace{-0.3cm}
\begin{align*}
    & \nabla V(x)^\top f(x)  \le 0 \text{ for all } x \in S.
\end{align*}
Let $E:=\{x \in S: \nabla V(x)^\top f(x)=0\}$. Then for all $x \in S$ and $\eps>0$ there exists $T>0$ such that $\phi_f(x,t) \in {B_\eps(E)}$. 

Furthermore, if $0 \in D$, $V(0)=0$, $V(x)>0$ for all $x \in D/\{0\}$ and $\phi_f(x,t) \in E$ for all $t \ge 0$ iff $x=0$ then the ODE is asymptotically stable.

Moreover, if $\gamma>0$ is such that $\{x\in S: V(x) \le  \gamma\} \subseteq  {S} $ then $\{x\in S: V(x) \le \gamma\} \subseteq ROA_f$.
\end{thm}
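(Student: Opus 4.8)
The plan is to follow the classical proof of LaSalle's principle via the $\omega$-limit set, and then to specialize the resulting attractivity statement to obtain both asymptotic stability and the ROA inclusion. Fix $x \in S$; since $S$ is a compact invariant set, the trajectory $\phi_f(x,t)$ remains in $S$ for all $t \ge 0$. First I would study the scalar map $v(t) := V(\phi_f(x,t))$ and compute $\dot v(t) = \nabla V(\phi_f(x,t))^\top f(\phi_f(x,t)) \le 0$, so $v$ is non-increasing. Since $V$ is continuous on the compact set $S$ it is bounded below there, and a non-increasing function bounded below converges, so $v(t) \to c$ for some $c \in \R$ as $t \to \infty$.

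The heart of the argument is the $\omega$-limit set $\omega(x) := \{ y \in \R^n : \exists\, t_n \to \infty \text{ with } \phi_f(x,t_n) \to y \}$. Because the trajectory lives in the compact set $S$, the standard topological argument shows that $\omega(x)$ is nonempty, compact, invariant, and attracting, in the sense that $\inf_{y \in \omega(x)} \norm{\phi_f(x,t) - y}_2 \to 0$ as $t \to \infty$. By continuity of $V$ together with $v(t) \to c$, every $y \in \omega(x)$ satisfies $V(y) = c$, so $V$ is constant on $\omega(x)$. Invariance of $\omega(x)$ then gives $\frac{d}{dt} V(\phi_f(y,t)) = 0$ for each $y \in \omega(x)$, i.e. $\nabla V(y)^\top f(y) = 0$, whence $\omega(x) \subseteq E$. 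Since the trajectory approaches $\omega(x) \subseteq E$, for every $\eps > 0$ there is $T > 0$ with $\phi_f(x,t) \in B_\eps(E)$ for all $t \ge T$, which establishes the first claim.

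For asymptotic stability I would first recover Lyapunov stability from the standard argument: positive definiteness of $V$ on $D$ together with $\dot V \le 0$ confines trajectories started in a small ball to small sublevel sets of $V$, giving the $\eps$--$\delta$ property. For attractivity, observe that $\omega(x)$ is an invariant subset of $E$; by the hypothesis that $\phi_f(y,t) \in E$ for all $t \ge 0$ holds only when $y = 0$, and since invariance forces the entire trajectory of every $y \in \omega(x)$ to remain in $\omega(x) \subseteq E$, we conclude $\omega(x) = \{0\}$, so $\phi_f(x,t) \to 0$. This in fact yields convergence to the origin for every $x \in S$, not merely near it.

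Finally, for the ROA inclusion, set $\Omega_\gamma := \{x \in S : V(x) \le \gamma\}$. This set is compact, being the intersection of the compact $S$ with a closed sublevel set of the continuous $V$, and it is forward invariant: for $x \in \Omega_\gamma$ we have $\phi_f(x,t) \in S$ by invariance of $S$, and $V(\phi_f(x,t)) \le V(x) \le \gamma$ by monotonicity of $v$, so $\phi_f(x,t) \in \Omega_\gamma$. Applying the attractivity conclusion above to each $x \in \Omega_\gamma \subseteq S$ gives $\phi_f(x,t) \to 0$, i.e. $\Omega_\gamma \subseteq ROA_f$. I expect the main obstacle to be establishing the three properties of $\omega(x)$ — nonemptiness, compactness, and especially invariance — rigorously from compactness of $S$ and continuity of the flow $\phi_f$; once these are in hand, the remaining steps follow from continuity and the monotonicity of $v$.
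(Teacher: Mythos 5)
The paper offers no proof of this theorem --- it is imported verbatim from~\cite{khalil2002nonlinear} as a known result --- so there is no in-paper argument to compare against; your proof is the standard $\omega$-limit-set proof of LaSalle's principle found in that reference (trajectory confined to the compact invariant $S$, $V$ monotone and hence constant on the nonempty, compact, invariant $\omega$-limit set, so $\omega(x)\subseteq E$, with the invariance hypothesis on $E$ collapsing $\omega(x)$ to $\{0\}$), and it is correct. Your closing observation is also apt: as literally stated the condition $\{x\in S: V(x)\le\gamma\}\subseteq S$ holds by definition, so the ``Moreover'' clause is already implied by the attractivity you establish on all of $S$; the intended hypothesis is presumably $\{x\in D: V(x)\le\gamma\}\subseteq S$, which is what makes the sublevel set forward invariant when the decrease condition is only known on $S$.
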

Theorem~\ref{thm: LF subset of ROA} shows that for a given ODE, if we can find a LF, then we can construct an inner-approximate of the ROA of the ODE. However, this theorem does not show that there must necessarily exists a LF for a given ODE or that the ROA of the ODE can be exactly characterized by an LF. 



It has been shown in~\cite{jones2021converse} that for any locally exponentially stable ODE~\eqref{eqn: ODE}, there exists a bounded and continuous LF of the form,
\vspace{-0.2cm} \begin{align} \label{eq: Zubov LF}
    V^*_{\lambda, \beta}(x) \hspace{-0.1cm}: = \begin{cases} \hspace{-0.05cm} 1 \hspace{-0.05cm} - \hspace{-0.05cm} \exp \left( \hspace{-0.05cm} - \hspace{-0.05cm} \lambda \hspace{-0.05cm} \int_{0}^\infty ||\phi_f(x,t)||_2^{2\beta} dt \right) \hspace{-0.1cm} \text{ if } x \in ROA_f \\
    \hspace{-0.05cm} 1 \text{ otherwise,} \end{cases}
\end{align}
where $\lambda>0$ and $\beta \in \N$. Moreover, for sufficiently large $\lambda$ and $\beta$, this converse LF, $V^*_{\lambda, \beta}$, is Lipschitz continuous and hence differentiable almost everywhere (by Rademacher's theorem). The smoothness properties of this particular converse LF makes it highly suitable for function fitting. Furthermore, $\{ x \in \R^n: V^*_{\lambda, \beta}(x) <1 \} = ROA_f$.
\vspace{-0.2cm}
\section{Fitting Bernstein polynomials to converse Lyapunov functions} \label{sec: fitting Bernstein poly to LF}
\vspace{-0.2cm}
Given an ODE~\eqref{eqn: ODE}, in this section we show that by using an ODE solver to generate trajectory data,  Eq.~\eqref{eq: Zubov LF} can be used to construct input-output data of a converse LF. By fitting a function to this data we can approximate this converse LF in the hope of constructing an ROA estimation.

Specifically, we fit polynomial functions to the generated input/output data of the converse LF. Although there are many ways to fit polynomials to data (each having their relative advantages and disadvantages), in this paper, we have chosen a method based on Bernstein approximations. As we will next see, Bernstein's method for fitting polynomials to data is an optimization free approach that is guaranteed to converge uniformly.
\vspace{-0.2cm}
\subsection{Bernstein Approximation of Smooth Functions}
\vspace{-0.2cm}
We now provide a brief description of how Bernstein polynomials can approximate smooth functions. For a more in-depth overview of the field we refer to~\cite{veretennikov2016partial}. Now, recalling from Section~\ref{sec:notation} that we defined $\mcl P_d(\R^n, \R)$ as the set of $d$-degree polynomials we next define the Bernstein operator.
\vspace{-0.2cm}
\begin{defn}
	We denote the degree $d \in \N$ Bernstein operator by $\mcl B_d: C(\R^n, \R) \to \mcl P_{d }(\R^n, \R)$ and for $V \in C(\R^n, \R)$ we define $\mcl B_d V \in \mcl P_{d }(\R^n, \R)$ by
	\begin{align} \label{eq: Bernstein}
\vspace{-20pt}
	\mcl B_d V (x) &  := \sum_{k_n=0}^d  \cdots  \sum_{k_{1}=0}^d  V(k_1/d,....,k_n/d)  \\ \nonumber
	& \qquad \times \prod_{j=1}^{n}  {d \choose k_j} x_j^{k_j} (1- x_j)^{d - k_j}.
	\end{align}
\end{defn}
\noindent Given a function $V \in C(\R^n, \R)$, we can calculate the polynomial $\mcl B_d V$ using Eq.~\eqref{eq: Bernstein} with only knowledge of the values of $V$ at uniformly gridded points in $[0,1]^n$. Thus, in order to calculate $\mcl B_d V$ it is not necessary to have an analytic expression of $V$. We next recall that $\mcl B_d V \to V$ uniformly as $d \to \infty$. Moreover, although the Bernstein approximation in Eq.~\eqref{eq: Bernstein} only involves the value of the $0'th$ differential order of $V$, if $V$ is differentiable, then it follows that the derivative of $\mcl B_d V$ will also converge to the derivative of $V$. This is a particularly useful when it comes to approximating converse LFs because we would also like our approximation to be a LF itself. Thus to make our approximation, $P$, have the property that it decreases along solution trajectories, $\dot{P}(x)<0$, we ensure the derivative of $P$ also approximates the derivative of the converse LF, $\dot{V}(x)<0$. 
\vspace{-0.2cm}
\begin{thm}[Multivariate uniform approximation by Bernstein polynomials, see Theorem~4 in~\cite{veretennikov2016partial}] \label{thm: bernstein approx}
	Given $\alpha=(\alpha_1,..., \alpha_n) \subset \N^n$ suppose $D^{\alpha}V \in C(\R^n, \R)$ then it follows 
	\begin{align}
	\vspace{-20pt}
	\lim_{d \to \infty} \sup_{x \in [0,1]^n}| D^{\alpha}\mcl{B}_dV(x) - D^{\alpha}V(x)|=0.
	\end{align}
\end{thm}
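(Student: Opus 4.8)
The plan is to reduce the general statement to the classical $\alpha=[0,\dots,0]^\top$ case of uniform Bernstein approximation, by exploiting the tensor-product structure of $\mcl B_d$ together with an explicit formula for the partial derivatives of a Bernstein polynomial in terms of finite differences of $V$. First I would record the base case. Writing the operator as the expectation $\mcl B_d V(x) = \mathbb E[V(S_1/d,\dots,S_n/d)]$, where $S_1,\dots,S_n$ are independent with $S_j \sim \mathrm{Binomial}(d,x_j)$, makes two facts transparent: $\mcl B_d$ is a positive linear operator reproducing constants, so $\abs{\mcl B_d g(x)} \le \mcl B_d\abs{g}(x) \le \norm{g}_\infty$ for every $g \in C([0,1]^n,\R)$; and, since $S_j/d \to x_j$ with variance $x_j(1-x_j)/d \to 0$, the weak law of large numbers together with the uniform continuity of $g$ on $[0,1]^n$ gives $\sup_{x\in[0,1]^n}\abs{\mcl B_d g(x) - g(x)} \to 0$. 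This is exactly the claim for $\alpha=[0,\dots,0]^\top$.

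Next I would derive the differentiation formula. In one variable, induction on $m$ yields $\frac{d^m}{dx^m}\mcl B_d V(x) = \frac{d!}{(d-m)!}\sum_{k=0}^{d-m}(\Delta_{1/d}^m V)(k/d)\binom{d-m}{k}x^k(1-x)^{d-m-k}$, where $(\Delta_h^m V)(y) := \sum_{i=0}^m (-1)^{m-i}\binom{m}{i}V(y+ih)$ is the $m$-th forward difference; the inductive step is a telescoping rearrangement of the derivative of $x^k(1-x)^{d-k}$. Because the kernel $\prod_j \binom{d}{k_j}x_j^{k_j}(1-x_j)^{d-k_j}$ factorizes across coordinates, applying $D^\alpha$ factorizes as well and produces the multivariate analogue $D^\alpha \mcl B_d V(x) = \big(\prod_{j=1}^n \tfrac{d!}{(d-\alpha_j)!}\big)\,\widetilde{\mcl B}_{d-\alpha}\!\big[\Delta_{1/d}^\alpha V\big](x)$, where $\Delta_{1/d}^\alpha$ is the mixed forward difference taking $\alpha_j$ differences in coordinate $j$ and $\widetilde{\mcl B}_{d-\alpha}$ is the Bernstein-type operator of degree $d-\alpha_j$ in coordinate $j$.

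Then I would introduce the rescaled integrand $g_d := d^{\abs{\alpha}}\,\Delta_{1/d}^\alpha V$ with $\abs{\alpha}=\sum_j \alpha_j$, so that $D^\alpha \mcl B_d V = c_d\,\widetilde{\mcl B}_{d-\alpha}g_d$ with $c_d := d^{-\abs{\alpha}}\prod_j \tfrac{d!}{(d-\alpha_j)!}\to 1$. The mean-value form of the finite-difference remainder, applied coordinate by coordinate, shows $d^{\abs{\alpha}}(\Delta_{1/d}^\alpha V)(y) = D^\alpha V(\xi_d)$ for some $\xi_d$ within $\ell^\infty$-distance $\max_j \alpha_j/d$ of $y$; hence $g_d \to D^\alpha V =: g$ uniformly on $[0,1]^n$, using that $D^\alpha V$ is continuous and therefore uniformly continuous on a compact neighborhood of $[0,1]^n$. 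Finally I would split $\norm{\widetilde{\mcl B}_{d-\alpha}g_d - g}_\infty \le \norm{\widetilde{\mcl B}_{d-\alpha}(g_d-g)}_\infty + \norm{\widetilde{\mcl B}_{d-\alpha}g - g}_\infty$; the first term is bounded by $\norm{g_d-g}_\infty \to 0$ by contractivity, and the second tends to $0$ by the base case applied to the fixed continuous function $g$ (noting each coordinate degree $d-\alpha_j \to \infty$). Multiplying by $c_d\to 1$ gives the result.

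The main obstacle I expect is the differentiation formula: obtaining the finite-difference identity with the correct degree reduction and boundary index shifts, and, more subtly, handling the fact that the operator acts on a $d$-dependent function $g_d$ rather than a fixed one, which is precisely why contractivity of $\widetilde{\mcl B}_{d-\alpha}$ (not merely the base-case convergence) is needed for the first term of the split. The evaluation of $V$ at grid points that may lie slightly outside $[0,1]^n$ is harmless, since $V$ and $D^\alpha V$ are assumed continuous on all of $\R^n$.
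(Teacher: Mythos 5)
The paper does not actually prove this statement --- it is imported verbatim as Theorem~4 of the cited reference \cite{veretennikov2016partial} --- so there is no in-paper argument to compare against. Your proof is a correct, self-contained reconstruction of the classical simultaneous-approximation argument (essentially Lorentz's univariate proof tensorized), and it is in all likelihood the same route the cited reference takes: base case by positivity plus the weak law of large numbers, the finite-difference representation of $D^{\alpha}\mcl B_d V$, and the split $\norm{\widetilde{\mcl B}_{d-\alpha}(g_d-g)}_\infty+\norm{\widetilde{\mcl B}_{d-\alpha}g-g}_\infty$ with contractivity handling the $d$-dependent integrand.

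Two points deserve explicit care if you write this out in full. First, in your differentiation formula the degree-$(d-\alpha_j)$ basis is paired with samples of $\Delta_{1/d}^{\alpha}V$ at the nodes $k_j/d$ rather than $k_j/(d-\alpha_j)$, so $\widetilde{\mcl B}_{d-\alpha}$ is not literally the Bernstein operator of that degree; your ``Bernstein-type'' hedge is the right instinct, but you should verify that this perturbed operator is still positive, reproduces constants (hence is contractive), and still converges to the identity on fixed continuous functions --- all of which hold because the node shift is $O(1/d)$ and the binomial mean $(d-\alpha_j)x_j/d\to x_j$. Second, the step $d^{\abs{\alpha}}\Delta_{1/d}^{\alpha}V\to D^{\alpha}V$ uniformly requires the iterated partials taken coordinate by coordinate (or an integral Peano-kernel representation of the mixed difference), which is slightly more than the literal hypothesis ``$D^{\alpha}V\in C(\R^n,\R)$'' for that single multi-index; this is a looseness in the theorem statement as transcribed rather than a flaw in your argument, and it is harmless in the paper's applications where $V$ is assumed sufficiently smooth.
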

Theorem~\ref{thm: bernstein approx} shows that Eq.~\eqref{eq: Bernstein} can be used to approximate functions over $[0,1]^n$. Note, using the same methodology, we may also approximate a function, $V$, over some set $[a,b]^n$ where $a<b$. In order to do this we first apply a linear coordinate change mapping $[a,b]^n$ to $[0,1]^n$, defining  $\tilde{V}(x):= V\left(\frac{x_1-a_1}{b_1-a_1},...,\frac{x_n-a_n}{b_n-a_n}\right)$.We then apply Eq.~\eqref{eq: Bernstein} to approximate $\tilde{V}(x)$ over $[0,1]^n$, yielding $\mcl B_d \tilde{V}$ such that $	\lim_{d \to \infty} \sup_{x \in [0,1]^n}| D^{\alpha}\mcl{B}_d \tilde{V}(x) - D^{\alpha} \tilde{V}(x)|=0$ (by Theorem~\ref{thm: bernstein approx}). Finally, we again change the coordinates, mapping $[0,1]^n$ back to $[a,b]^n$, defining $J(x): =\mcl B_d \tilde{V}((b_1-a_1){x_1} + a_1,....,(b_n-a_n){x_n} + a_n)$. It then follows that
\begin{align} \label{eq: fitting on rectable}
&\lim_{d \to \infty} \sup_{x \in [a,b]^n}| D^{\alpha}J(x) - D^{\alpha} {V}(x)|\\ \nonumber
&= \hspace{-0.08cm} \lim_{d \to \infty} \hspace{-0.05cm} \sup_{x \in [0,1]^n} \hspace{-0.05cm} \left| D^{\alpha}J\left(\frac{x_1-a_1}{b_1-a_1},...,\frac{x_n-a_n}{b_n-a_n} \right) \hspace{-0.05cm} - \hspace{-0.05cm} D^{\alpha} \tilde{V}(x) \right|\\ \nonumber
& = \lim_{d \to \infty} \sup_{x \in [0,1]^n}| D^{\alpha}\mcl{B}_d \tilde{V}(x) - D^{\alpha} \tilde{V}(x)|=0.
\end{align}

\subsection{Generating Data from Converse Lyapunov Functions}
Converse LFs can be approximated by polynomials using Eq.~\eqref{eq: Bernstein}. However, in order to apply Eq.~\eqref{eq: Bernstein} we must know the value of the converse LF at uniformly gridded points in $[a,b]^n$ (note approximation over $[a,b]^n$ rather than $[0,1]^n$ can be achieved through linear coordinate transformations, see Eq.~\eqref{eq: fitting on rectable}). 

Given a set of initial conditions, $\{x_i\}_{1 \le i \le N} \subset [a,b]^n$, and a terminal trajectory time $T>0$, it is possible to generate trajectory data $D_{i,j}:= ||\phi_f(x_i,(j-1) \Delta t)||_2$, where $\Delta t >0$ is some small time-step and $1 \le j \le \frac{T+1}{\Delta t}$. This can be achieved using any ODE solver, for instance, Matlab's \texttt{ODE45}. Of course, in order to use an ODE solver this does require complete knowledge of the vector field, $f$. In the case where the model of the system is unknown it may still be possible to generate the required trajectory data from experimental data. Through the semi-group property of solution maps, $\phi_f(\phi_f(x,t),s)=\phi_f(x,t+s)$, knowledge of just a single trajectory can generate a vast number of data points. Moreover, interpolation can be used in places where there are gaps in our data knowledge. 

Now, given $\lambda>0$, $\beta \in \N$ trajectory data, $D \in \R^{N \times (K+1)}$, for sufficiently large $K \in \N$, we can approximate the value of the corresponding converse LF given in Eq.~\eqref{eq: Zubov LF} in the following way
\begin{align} \label{eqn: traj data value of LF}
  & V^*_{\lambda,\beta}(x_i) \approx 1 - e^{-\lambda W(x_i)},\\ \nonumber
    & \text{ where } W(x_i) \approx \int_0^{K \Delta t} ||\phi_f(x_i,t)||_2^{2 \beta} dt \approx \sum_{j=1}^{K+1} D_{i,j}^{2 \beta} \Delta t.
\end{align}

\section{Improving ROA Estimation with Approximated Converse Lyapunov Functions} \label{sec: Improving ROA Estimation with Approximated Converse Lyapunov Functions}
Given an ODE defined by some vector field $f$, we have shown that through applications of Eqs.~\eqref{eq: Bernstein} and~\eqref{eqn: traj data value of LF} that it is possible to numerically \st{a} construct a Bernstein polynomial approximation, $\mcl B_d V^*_{\lambda, \beta}$ for some $d \in \N$, $\lambda>0$ and $\beta \in \N$, of the converse LF, $V^*_{\lambda, \beta}$, given in Eq.~\eqref{eq: Zubov LF}. Moreover, assuming that $D^\alpha V^*_{\lambda, \beta}$ is continuous, where $\alpha \in \N^n$, Theorem~\ref{thm: bernstein approx} can be used to show that $\lim_{d \to \infty} D^\alpha \mcl B_d V^*_{\lambda, \beta} \to D^\alpha V^*_{\lambda, \beta}$ uniformly in $[a,b]^n$ (note approximation over $[a,b]^n$ rather than $[0,1]^n$ can be achieved through linear coordinate transformations, see Eq.~\eqref{eq: fitting on rectable}).

Ideally, for sufficiently large $d \in \N$, $\lambda>0$ and $\beta \in \N$, our approximation, $\mcl B_d V^*_{\lambda, \beta}$, will also be a LF over some set containing the origin; that is $\mcl B_d V^*_{\lambda, \beta}(0)=0$, $\mcl B_d V^*_{\lambda, \beta}(x) \ge 0$ for all $x \in \Omega / \{0\}$, and $\nabla \mcl B_d V^*_{\lambda, \beta}(x)^\top f(x) < 0$ for all $x \in \Omega /\{0\}$. Then Theorem~\ref{thm: LF subset of ROA} can be used to show that the sublevel set of $\mcl B_d V^*_{\lambda, \beta}$ yields an inner approximation of the ROA of the given ODE.

Unfortunately, despite the fact $\mcl B_d V^*_{\lambda, \beta}$ tends to $V^*_{\lambda, \beta}$ as $d \to \infty$ and $V^*_{\lambda, \beta}$ is a LF, $\mcl B_d V^*_{\lambda, \beta}$ is not necessarily a LF. To see this we first note that, for a given $x_0 \in [a,b]^n$ if $\nabla V^*_{\lambda, \beta}(x_0)^\top f(x_0)< -a$, where $a>0$, it follows by Theorem~\ref{thm: bernstein approx} that there exists $D \in \N$ such that $||\nabla \mcl B_d V^*_{\lambda, \beta}(x_0) -  \nabla V^*_{\lambda, \beta}(x_0)||_2 < \frac{a}{2 ||f(x_0)||_2}$ for all $d>D$. Thus, using the Cauchy Swarz inequality we have that for all $d>D$
\begin{align} \label{eq: Berstein strictly decreasing}
   & \nabla \mcl B_d V^*_{\lambda, \beta}(x_0)^\top f(x_0)  \\ \nonumber
   &= \hspace{-0.1cm} \nabla \hspace{-0.05cm} \mcl B_d V^*_{\lambda, \beta}( \hspace{-0.02cm} x_0 \hspace{-0.02cm})^\top \hspace{-0.15cm} f( \hspace{-0.02cm} x_0 \hspace{-0.02cm} ) \hspace{-0.1cm}- \hspace{-0.1cm} \nabla \hspace{-0.05cm}  V^*_{\lambda, \beta}(\hspace{-0.02cm} x_0 \hspace{-0.02cm})^\top \hspace{-0.15cm} f( \hspace{-0.02cm} x_0 \hspace{-0.02cm}) \hspace{-0.1cm} + \hspace{-0.1cm}  \nabla \hspace{-0.05cm}  V^*_{\lambda, \beta}( \hspace{-0.02cm} x_0\hspace{-0.02cm} )^\top \hspace{-0.15cm} f(\hspace{-0.02cm} x_0 \hspace{-0.02cm})\\ \nonumber
   &  \le ||\nabla \hspace{-0.05cm}  \mcl B_d V^*_{\lambda, \beta}( \hspace{-0.02cm} x_0 \hspace{-0.02cm} ) \hspace{-0.05cm} - \hspace{-0.05cm}   \nabla \hspace{-0.05cm}  V^*_{\lambda, \beta}( \hspace{-0.02cm} x_0 \hspace{-0.02cm} )||_2 ||f( \hspace{-0.02cm} x_0 \hspace{-0.02cm})||_2 \hspace{-0.05cm}  - \hspace{-0.05cm}  a \le -\frac{a}{2} <0.
\end{align}
Eq.~\eqref{eq: Berstein strictly decreasing} shows that for sufficiently large $d$, whenever $V^*_{\lambda, \beta}$ is strictly decreasing along the solution map we also have that $\mcl B_d V^*_{\lambda, \beta}$ is strictly decreasing along the solution map. However, at the origin $V^*_{\lambda, \beta}$ is not strictly decreasing along the solution map, that is $\nabla V^*_{\lambda, \beta}(0)^\top f(0)=0$. Because of this fact and the fact $\mcl B_d V^*_{\lambda, \beta}$ is continuous, in general for some finite $d \in \N$  our approximation will be such that that $\nabla \mcl B_d V^*_{\lambda, \beta}(x)^\top f(x) \ge 0$ for all $x$ in some small neighborhood of the origin. Thus in general, for finite $d \in \N$, it follows that we will have $B_d V^*_{\lambda, \beta}(x)^\top f(x)<0$ for all $x$ in some ``donut shaped" region, $\{y \in \R^n: \gamma_1 \le B_d V^*_{\lambda, \beta}(y) \le \gamma_2\}$ for some $\gamma_1<\gamma_2$, as opposed to a sublevel set $\{y \in \R^n: B_d V^*(y) \le \gamma_2\}$. By a similar argument, in general, we do not expect  $B_d V^*_{\lambda, \beta}(0)=0$ for any fixed $d \in \N$ and thus in general $B_d V^*_{\lambda, \beta}$ is not a LF.

Although, $B_d V^*_{\lambda, \beta}$ is not a LF, and therefore cannot certify the origin is asymptotically stable, we will next show, in Prop.~\ref{prop: donut LF B_eta attractor set}, that functions strictly decreasing along the solution map inside some ``donut shaped" region can still be used to certify that the solution map must enter some ball of radius $\eta>0$ around the origin, $B_\eta(0)$. 
\vspace{-0.2cm}
\begin{prop} \label{prop: donut LF B_eta attractor set}
Consider an ODE~\eqref{eqn: ODE} defined by some vector field $f \in C^1(\R^n, \R^n)$ and compact set $D \subset \R^n$. Suppose $V \in C^1(D,\R)$,  $\gamma_1<\gamma_2$ and $\eta,\eps,a>0$ are such that
\begin{itemize}
\item $B_\eps(\{y \in D: V(y) \le \gamma_1\}) \subset B_\eta(0)$.
\item $ \{y \in D: V(y) \le \gamma_2\} \subset {D^\circ}$
    \item For all $x \in D/\{y \in D: V(y) \le \gamma_1\}$ we have that
    \begin{equation} \label{pfeq: V deacreasing along traj}
        \nabla V(x)^\top f(x)<-a<0.
    \end{equation}
\end{itemize}
 Then it follows that for any $x \in \{y \in D: V(y) \le \gamma_2\}/B_\eta(0)$ there exists $T>0$ such that
\begin{equation}
    \phi_f(x,T) \in B_\eta(0). 
\end{equation} 
\vspace{-25pt}
\end{prop}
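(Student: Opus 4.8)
The plan is to track the value of the candidate function along a trajectory and run a trapping (forward-invariance) argument. Fix $x_0 \in \{y\in D:V(y)\le\gamma_2\}\setminus B_\eta(0)$ and define the scalar signal $g(t):=V(\phi_f(x_0,t))$. Because $\{y\in D:V(y)\le\gamma_1\}\subseteq B_\eps(\{y\in D:V(y)\le\gamma_1\})\subseteq B_\eta(0)$ while $x_0\notin B_\eta(0)$, the starting point must lie in the open ``donut'', i.e. $\gamma_1<g(0)\le\gamma_2$. The whole proof then reduces to showing that $g$ is forced down to the level $\gamma_1$ in finite time, since the first hypothesis immediately places $\phi_f(x_0,T)\in\{y\in D:V(y)\le\gamma_1\}\subseteq B_\eta(0)$.

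First I would prove the trapping claim: until the trajectory first meets the level set $\{V=\gamma_1\}$, it cannot leave the outer sublevel set $\{y\in D:V(y)\le\gamma_2\}$. Introduce the hitting time $T^\ast:=\inf\{t\ge0:g(t)=\gamma_1\}$ and the exit time $t_1:=\inf\{t\ge0:\phi_f(x_0,t)\notin\{y\in D:V(y)\le\gamma_2\}\}$. On $[0,\min(t_1,T^\ast))$ the trajectory stays in $\{y\in D:V(y)\le\gamma_2\}$ with $V>\gamma_1$, hence sits in $D\setminus\{y\in D:V(y)\le\gamma_1\}$, so Eq.~\eqref{pfeq: V deacreasing along traj} gives $\dot g(t)=\nabla V(\phi_f(x_0,t))^\top f(\phi_f(x_0,t))<-a$. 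If exit occurred at some finite $t_1\le T^\ast$, continuity would make $\phi_f(x_0,t_1)$ a boundary point of $\{y\in D:V(y)\le\gamma_2\}$; as this set lies in $D^\circ$, that boundary point must satisfy $V=\gamma_2$, contradicting $g(t_1)<g(0)\le\gamma_2$ obtained from the strict decrease. Hence $t_1>T^\ast$.

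Next I would integrate the differential inequality. On $[0,T^\ast)$ strict decrease yields $g(t)<g(0)-at\le\gamma_2-at$, so $g$ cannot remain above $\gamma_1$ beyond $t=(\gamma_2-\gamma_1)/a$; therefore $T^\ast\le(\gamma_2-\gamma_1)/a<\infty$, and continuity of $g$ forces $g(T^\ast)=\gamma_1$. Then $\phi_f(x_0,T^\ast)\in\{y\in D:V(y)\le\gamma_1\}\subseteq B_\eta(0)$, and setting $T=T^\ast$ completes the argument.

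I expect the trapping step to be the main obstacle, and it is precisely where the hypothesis $\{y\in D:V(y)\le\gamma_2\}\subset D^\circ$ earns its keep. The danger is that $V$, $f$, and the decrease estimate~\eqref{pfeq: V deacreasing along traj} are only controlled on $D$, so if the solution could slip out of $D$ the Lyapunov bookkeeping would break down. Forcing the outer sublevel set strictly into the interior of $D$ guarantees that any escape must occur by crossing the level $V=\gamma_2$ from below, which strict decrease forbids; the trajectory is thus trapped until it descends to the inner level set. The only remaining care is that the exit and hitting times are attained, which follows from continuity of $t\mapsto V(\phi_f(x_0,t))$ together with closedness of the sublevel sets on the compact set $D$.
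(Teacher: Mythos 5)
Your argument is correct, but it follows a genuinely different route from the paper's. You work directly with the scalar signal $g(t)=V(\phi_f(x,t))$: a first-exit-time argument shows the trajectory cannot leave $\{y\in D:V(y)\le\gamma_2\}$ before reaching the level $\gamma_1$ (this is exactly where the hypothesis $\{y\in D:V(y)\le\gamma_2\}\subset D^\circ$ is spent, as you observe), and integrating $\dot g<-a$ then forces $g$ down to $\gamma_1$ within time $(\gamma_2-\gamma_1)/a$, after which $\phi_f(x,T)\in\{y\in D:V(y)\le\gamma_1\}\subset B_\eta(0)$. The paper instead manufactures a genuine LaSalle certificate: it composes $V$ with a $C^\infty$ cutoff $\rho$ vanishing on $(-\infty,\gamma_1]$, so that $\tilde V=\rho\circ V$ satisfies $\nabla\tilde V(x)^\top f(x)\le 0$ on all of $S=\{y\in D:V(y)\le\gamma_2\}$ with equality set $E=\{y\in D:V(y)\le\gamma_1\}$, argues that $S$ is invariant, and invokes Thm.~\ref{thm: LF subset of ROA} to conclude trajectories enter $B_\eps(E)\subset B_\eta(0)$. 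Your version is more elementary and self-contained, produces the quantitative hitting-time bound $T\le(\gamma_2-\gamma_1)/a$, and in fact lands the trajectory in $\{V\le\gamma_1\}$ itself, so the $\eps$-fattening hypothesis is not even needed for your argument; the paper's version buys the reuse of an already-stated theorem and avoids the exit-time bookkeeping, but it genuinely requires the $B_\eps(\cdot)$ hypothesis because LaSalle only guarantees entry into a neighborhood of $E$. Both proofs hinge on the same trapping observation that keeps the trajectory inside $D$, where the hypotheses on $V$ and $f$ are actually available.
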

\begin{proof} Let $S:= \{y \in D:  V(y) \le \gamma_2\}$ and $\tilde{V}(x):=\rho(V(x))$ where $\rho$ is an infinitely smooth function defined by,
\begin{align*}
    \rho(x):=\begin{cases} \exp\left(\frac{-1}{(\gamma_1-x)^2} \right) \text{ for } x >\gamma_1,\\
    0 \text{ otherwise.}
    \end{cases}
    \vspace{-25pt}
\end{align*}
Now, it is clear that $\tilde{V}(x)=0$ for all $x \in \{y \in D:  V(y) \le \gamma_1\}$ and hence $\nabla \tilde{V}(x)^\top f(x)=0$ for all $x \in \{y \in D:  V(y) \le \gamma_1\}$. On the other hand, for $x \in S / D$ we have that
\begin{align*}
   \nabla \tilde{V}(x)^\top f(x)= \frac{2\rho(V(x)) \nabla V(x)^\top f(x)}{( V(x)-\gamma_1)^3}<0.
   \vspace{-25pt}
\end{align*}
Therefore, $\nabla \tilde{V}(x)^\top f(x) \le 0$ for all $x \in S$. Moreover, since $S\subset D^\circ$ and $V(\phi_f(x,t))$ is strictly decreasing on $\partial D$ it follows that $S$ is an invariant set. We are now in a position to apply Thm.~\ref{thm: LF subset of ROA} for $\tilde{V}$. 

Let $E:=\{y \in D: \nabla \tilde{V}(x)^\top f(x)=0\}$. Clearly from the definition of $\rho$ it follows that $E=\{y \in D: {V}(x)\le \gamma_1\}$ and hence $B_\eps(E) \subset B_\eta(0)$.

Now, for $x \in S$ and $\delta<\eps$ by Thm.~\ref{thm: LF subset of ROA} there exists $T>0$ such that $\phi_f(x,T) \in B_\delta(E) \subset B_\eps(E) \subset B_\eta(0)$. $\blacksquare$
\end{proof}
In Prop.~\ref{prop: donut LF B_eta attractor set} we have proposed conditions, based on some function we denote here as $V_1$, that certify that the solution map of a given ODE must enter some ball, $B_\eta(0)$. Recall that $V_1$ can be found by approximating a converse LF by Bernstein polynomials.

We next show that if there exists a LF, $V_2$, that certifies that $B_\eta(0)$ is an asymptotically stable set, then $V_1$ and $V_2$ can be used together to provide an improved inner approximation of the ROA of the given ODE. 
\vspace{-0.2cm}
\begin{thm} \label{thm: two step LF}
Consider an ODE~\eqref{eqn: ODE} defined by some vector field $f \in C^1(\R^n, \R^n)$ and compact set $D \subset \R^n$. Suppose there exists $V_1,V_2 \in C^1(D,\R)$,  $\gamma_1<\gamma_2$ and $\eta,\eps,a>0$ are such that
\begin{itemize}
\item $B_\eps(\{y \in D: V(y) \le \gamma_1\}) \subset B_\eta(0)$.
\item $ \{y \in D: V(y) \le \gamma_2\} \subset {D^\circ}$
    \item For all $x \in D/\{y \in D: V(y) \le \gamma_1\}$ we have that
    \begin{equation} \label{pfeqthm: V deacreasing along traj}
        \nabla V(x)^\top f(x)<-a<0.
    \end{equation}
\end{itemize}
Moreover, suppose $0 \in D$ and for some $\gamma_3>0$ $V_2$ satisfies 
    \begin{itemize}
\item $V_2(0) = 0$ and $V_2(x)>0$ for all $ x \in D/\{0\}$.
\item $\nabla V_2(x)^\top f(x) \le 0$ for all $ x \in D$.
\item \hspace{-0.1cm}$   \phi_f(x,t) \hspace{-0.12cm} \in \hspace{-0.1cm}  \{ x \hspace{-0.05cm} \in \hspace{-0.05cm} D \hspace{-0.05cm} : \hspace{-0.05cm} \nabla V_2(x)^\top \hspace{-0.1cm}  f(x) \hspace{-0.05cm} = \hspace{-0.05cm} 0\} \hspace{-0.07cm} \text{ for } \hspace{-0.05cm} t \hspace{-0.05cm} \ge \hspace{-0.05cm} 0 \hspace{-0.05cm} \text{ iff } \hspace{-0.05cm} x \hspace{-0.05cm}= \hspace{-0.05cm}0.$
\item $B_\eta(0) \subseteq \{y \in D: V_2(y) \le \gamma_3\} \subseteq D^\circ$.
\end{itemize}
Then $\{y \in D: V_1(y) \le \gamma_2\} \cup \{y \in D: V_2(y) \le \gamma_3\}  \subseteq ROA_f$.
\end{thm}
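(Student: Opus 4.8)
The plan is to certify the two pieces of the union separately and then glue them together using the invariance of $ROA_f$ under the flow. First I would observe that the four hypotheses placed on $V_2$ are precisely those required to invoke LaSalle's Invariance Principle (Thm.~\ref{thm: LF subset of ROA}) on the candidate set $S_2 := \{y \in D: V_2(y) \le \gamma_3\}$. Before applying it, I need to confirm $S_2$ is a compact invariant set: compactness follows since $S_2$ is a closed subset of the compact set $D$, and invariance follows from the standard sublevel-set argument — because $\nabla V_2(x)^\top f(x) \le 0$ on $D$, the map $t \mapsto V_2(\phi_f(x,t))$ is non-increasing, so a trajectory started in $S_2$ cannot cross the level set $\{V_2 = \gamma_3\}$, while the hypothesis $S_2 \subseteq D^\circ$ prevents escape from $D$ through any other part of the boundary. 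With $S_2$ invariant, the positive-definiteness of $V_2$ together with the ``$x=0$ iff the trajectory stays in $\{\nabla V_2^\top f = 0\}$'' condition lets LaSalle conclude asymptotic stability, and the ``Moreover'' clause of Thm.~\ref{thm: LF subset of ROA} (taking $\gamma = \gamma_3$, for which $\{x \in S_2: V_2(x) \le \gamma_3\} = S_2 \subseteq S_2$ trivially) yields $S_2 \subseteq ROA_f$. This already settles the $V_2$ half of the union.

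For the $V_1$ half, I note that the first three hypotheses of the theorem are exactly the hypotheses of Prop.~\ref{prop: donut LF B_eta attractor set} applied to $V_1$. Hence for every $x \in \{y \in D: V_1(y) \le \gamma_2\} \setminus B_\eta(0)$ there is a time $T > 0$ with $\phi_f(x,T) \in B_\eta(0)$. Since the hypotheses also supply $B_\eta(0) \subseteq S_2 \subseteq ROA_f$, this places $\phi_f(x,T)$ inside $ROA_f$.

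The gluing step, which I expect to be the crux, is to transfer membership of $\phi_f(x,T)$ in $ROA_f$ back to $x$ itself. This uses the backward invariance of $ROA_f$ along the flow: if $\phi_f(x,T) \in ROA_f$ then $\lim_{s \to \infty} \phi_f(\phi_f(x,T),s) = 0$, and the semigroup property $\phi_f(\phi_f(x,T),s) = \phi_f(x, T+s)$ gives $\lim_{t\to\infty} \phi_f(x,t) = 0$, i.e. $x \in ROA_f$. The only remaining case is $x \in \{y \in D: V_1(y) \le \gamma_2\} \cap B_\eta(0)$, which is immediate since $B_\eta(0) \subseteq S_2 \subseteq ROA_f$. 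Combining the two cases gives $\{y \in D: V_1(y) \le \gamma_2\} \subseteq ROA_f$, and together with $S_2 \subseteq ROA_f$ from the first paragraph we obtain the claimed inclusion of the union. The subtlety to watch is purely the flow-direction bookkeeping in the semigroup argument — ensuring the finite-time certificate from Prop.~\ref{prop: donut LF B_eta attractor set} is chained in the correct temporal order with the asymptotic convergence guaranteed by LaSalle.
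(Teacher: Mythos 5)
Your proposal is correct and follows essentially the same route as the paper's proof: LaSalle's Invariance Principle (Thm.~\ref{thm: LF subset of ROA}) certifies $\{y \in D: V_2(y) \le \gamma_3\} \subseteq ROA_f$, Prop.~\ref{prop: donut LF B_eta attractor set} sends points of $\{y \in D: V_1(y) \le \gamma_2\}$ into $B_\eta(0)$ in finite time, and the semigroup property transfers membership in $ROA_f$ back to the initial condition. You in fact supply slightly more detail than the paper does, by checking the compactness and invariance of the $V_2$-sublevel set explicitly and by treating the trivial case $x \in B_\eta(0)$ separately (which Prop.~\ref{prop: donut LF B_eta attractor set} formally excludes), but the underlying argument is identical.
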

\vspace{-0.2cm}
\begin{proof}
By Theorem~\ref{thm: LF subset of ROA} it follows that $\{y \in D: V_2(y) \le \gamma_3\}\subseteq ROA_f$.

If $x \in \{y \in D: V_1(y) \le \gamma_2\}$ then by Prop.~\ref{prop: donut LF B_eta attractor set} there exists $T>0$ such that
\vspace{-0.2cm}
\begin{equation}
    z:=\phi_f(x,T) \in B_\eta(0) \subseteq \{y \in D: V_2(y) \le \gamma_3\}. 
\end{equation}
\vspace{-0.1cm}
Since $\{y \in D: V_2(y) \le \gamma_3\}\subseteq ROA_f$ it follows that $\lim_{t \to \infty} ||\phi_f(z,t)||_2 =0$. Therefore, using the semi-group properties of solution maps we have that 
\vspace{-0.2cm}
\begin{align*}
    &\lim_{t \to \infty} ||\phi_f(x,t)||_2 = \lim_{t \to \infty} ||\phi_f( \phi_f(x,T),t-T)||_2\\
    &\quad = \lim_{t \to \infty} ||\phi_f( z,t-T)||_2= \lim_{s \to \infty} ||\phi_f( z,s)||_2=0,
\end{align*}
implying $x \in ROA_f$. Because the same argument can be used for any $x \in \{y \in D: V_1(y) \le \gamma_2\}$ it follows that $\{y \in D: V_1(y) \le \gamma_2\} \subseteq ROA_f$.

Since $\{y \in D: V_2(y) \le \gamma_3\}\subseteq ROA_f$ and $\{y \in D: V_1(y) \le \gamma_2\} \subseteq ROA_f$ it follows $\{y \in D: V_1(y) \le \gamma_2\} \cup \{y \in D: V_2(y) \le \gamma_3\}  \subseteq ROA_f$. $\blacksquare$
\end{proof}

\textbf{\underline{Practical Implementation of Theorem~\ref{thm: two step LF} }}\\
Thm.~\ref{thm: two step LF} shows how two functions, $V_1$ and $V_2$, can be used to produce an inner approximation of the ROA of a given ODE. The function, $V_2$, is a classical LF (positive semidefinite and decreasing along trajectories) but provides a poor estimation of the ROA, $\{y \in D: V_2(y) \le \gamma_3\} \subseteq ROA_f$. On the other hand, $V_1$ need not be a classical LF as there is no requirement that $V_2(0)=0$ or $\nabla V_2(x)^\top f(x) \le 0$ near the origin, making the computational search for a candidate $V_1$ amenable to data based methods. Whereas, the computational search for a candidate $V_2$ is redistricted to model based methods (energy functions or Sum-of-Squares programming).

In this paper we compute a candidate $V_1$ by fitting a Bernstein polynomial to a converse LF (Section~\ref{sec: fitting Bernstein poly to LF}). We compute candidate a $V_2$ using several methods including: 
\begin{itemize}
    \item Linearizing the system and then computing a quadratic LF by solving the resulting Linear Matrix Inequalities (LMIs) (see Subsection~\ref{subsec: Van Poll}).
    \item Using energy functions (see Subsection~\ref{subsec: SMIB}).
    \item Using Sum-of-Square (SOS) programming (see Subsection~\ref{subsec: model A}).
\end{itemize} 
\noindent Now, for a given ODE defined by a vector field $f$ and candidate functions, $V_1$ and $V_2$, we next outline how to compute $\eta,\gamma_1,\gamma_2,\gamma_3 \in \R$ from Thm.~\ref{thm: two step LF} to estimate $ROA_f$.\\
\textit{Step 1:} Compute the largest ROA estimation yielded by the LF $V_2$ using Thm.~\ref{thm: LF subset of ROA}. This amounts to finding $\{x \in D: V_2(x) \le \gamma^*_3\} \subseteq ROA_{f}$ where
\begin{align} \label{opt: 1 gam_3}
    &\gamma^*_3 \hspace{-0.05cm} \in \hspace{-0.05cm}  \arg \sup_{\gamma \in \R} \{\gamma\} \hspace{-0.05cm} \text{ such that } \hspace{-0.05cm}  \{ \hspace{-0.05cm}  x \hspace{-0.05cm} \in \hspace{-0.05cm} D \hspace{-0.05cm} : \hspace{-0.05cm} V_2(x) \hspace{-0.05cm} \le \hspace{-0.05cm} \gamma\} \hspace{-0.05cm} \subseteq \hspace{-0.05cm} S_{V_2},
\end{align}
and $S_{V_2} := \{x \in D: V_2(x)>0\} \cap \{x \in D: \nabla V_2(x)^T f(x) \le 0 \}$.\\
\textit{Step 2:} Find the largest ball that is certified to be contained inside the ROA. That is, solve 
\begin{align} \label{opt: 2 eta}
    \eta^* \in \arg \sup_{\eta>0}\{\eta\} \text{ such that } B_\eta(0)\subseteq \{x \hspace{-0.05cm} \in \hspace{-0.05cm} D: V_2(x) \hspace{-0.05cm} \le \hspace{-0.05cm} \gamma^*_3\}
\end{align}
\textit{Step 3:} Find largest sublevel set of $V_1$ contained in $B_\eta(0)$. That is, solve 
\begin{align} \label{opt: 3 gam_1}
\gamma_1^*\in \arg \sup_{\gamma>0} \{\gamma\} \text{ such that } \{x \hspace{-0.05cm} \in \hspace{-0.05cm} D: V_1(x) \hspace{-0.05cm} < \hspace{-0.05cm} \gamma\} \hspace{-0.05cm} \subset \hspace{-0.05cm} B_\eta(0).
\end{align}
\textit{Step 4:} Find the largest "donut shaped" set such that $V_1$ is decreasing. That is, solve
\begin{align} \label{opt: 4 gam_2}
& \gamma_2^*\in  \arg \sup_{\gamma \in \R}\{\gamma\} \text{ such that } \\ \nonumber
& \{ \hspace{-0.05cm} x \hspace{-0.05cm} \in \hspace{-0.05cm}  D: \gamma_1^* \le \hspace{-0.05cm} V_1(x) \hspace{-0.05cm} \le \gamma \hspace{-0.05cm} \} \subseteq \{ \hspace{-0.05cm} x \hspace{-0.05cm} \in \hspace{-0.05cm} D: \nabla V_1(x)^\top f(x) \hspace{-0.05cm} < \hspace{-0.05cm}0 \hspace{-0.05cm} \}.
\end{align}
Opts.~\eqref{opt: 1 gam_3}, \eqref{opt: 2 eta} \eqref{opt: 3 gam_1} and~\eqref{opt: 4 gam_2} are all set containment problems that can be solved using Putinar's Positivstellensatz to formulate auxiliary Sum-of-Squares optimization problems (possibly using algebriac constraints to enforce non-polynomial terms such as in~\cite{anghel2013algorithmic}). Alternatively, for two and three dimensional systems we can solve these Opts graphically by preforming bisection on $\eta,\gamma_1,\gamma_2,\gamma_3 \in \R$, plotting and verifying the set containment's hold. In the same way we can attempt to approximately solve these Opts for higher dimensional systems by plotting slices of the state space and graphically certifying the set containments. 

\begin{figure*}[t!]
        \begin{subfigure}[b]{0.25\textwidth}
              \includegraphics[width=\linewidth , trim = {1.75cm 1cm 1.75cm 1cm}, clip]{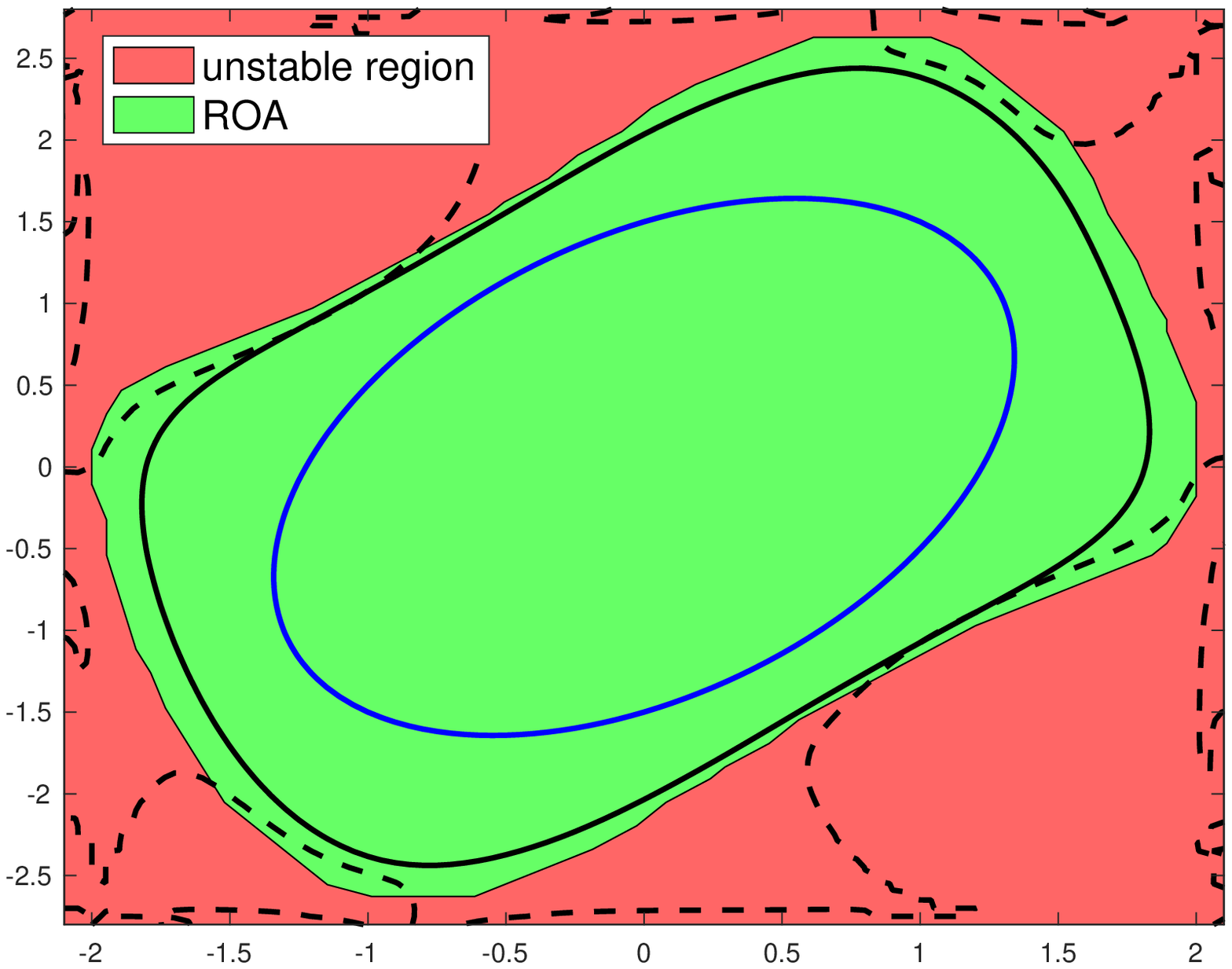}
               \vspace{-10pt}
                \caption{The Van der Pol system}
                \label{fig:deg_75_van_poll}
        \end{subfigure}%
        \begin{subfigure}[b]{0.25\textwidth}
                \includegraphics[width=\linewidth, trim = {1.75cm 1cm 1.75cm 1cm}, clip]{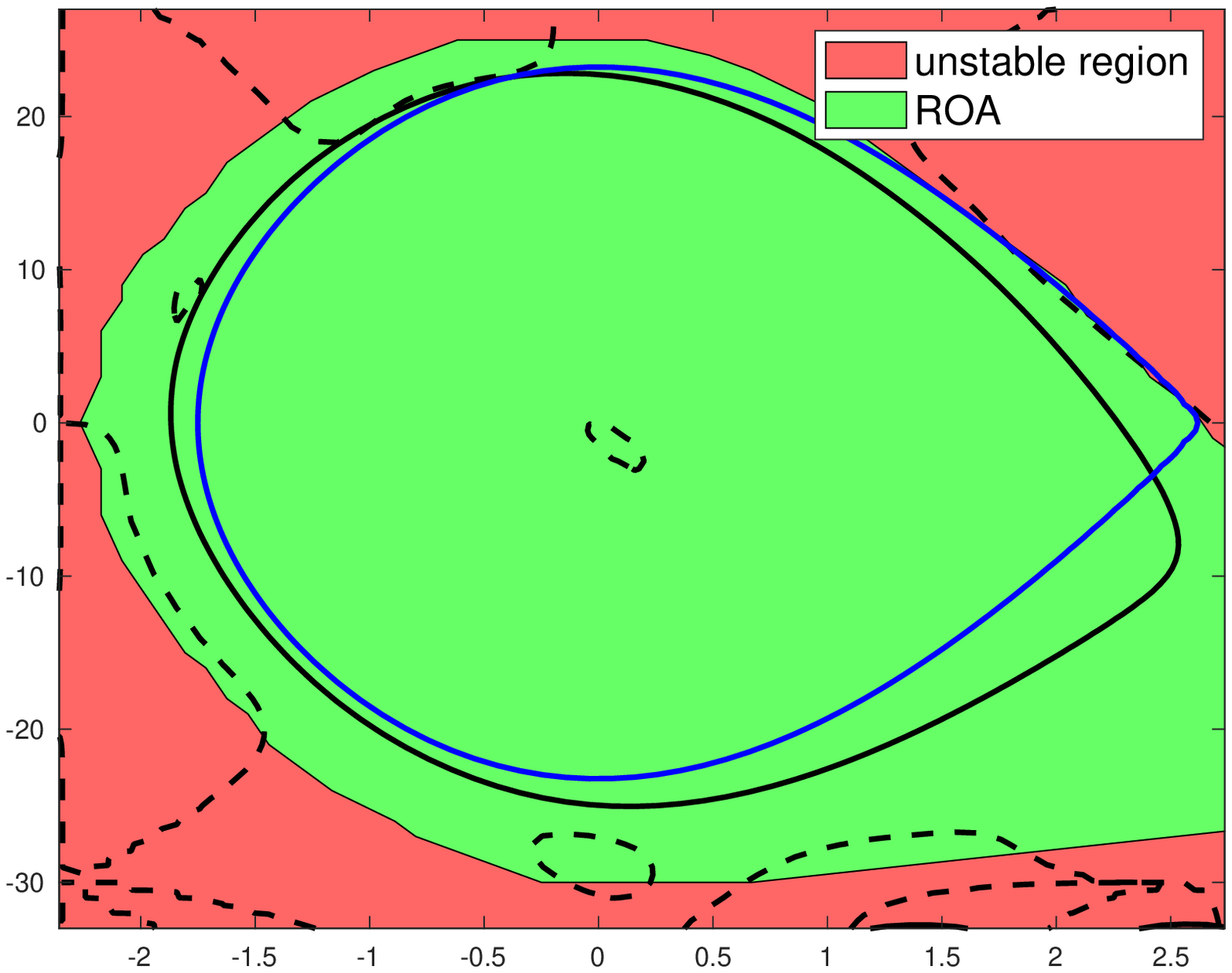}
                 \vspace{-10pt}
                \caption{The SMIB system}
                \label{fig:BernsteinROAsublvl}
        \end{subfigure}%
        \begin{subfigure}[b]{0.25\textwidth}
                \includegraphics[width=\linewidth, trim = {1.75cm 1cm 1.75cm 1cm}, clip]{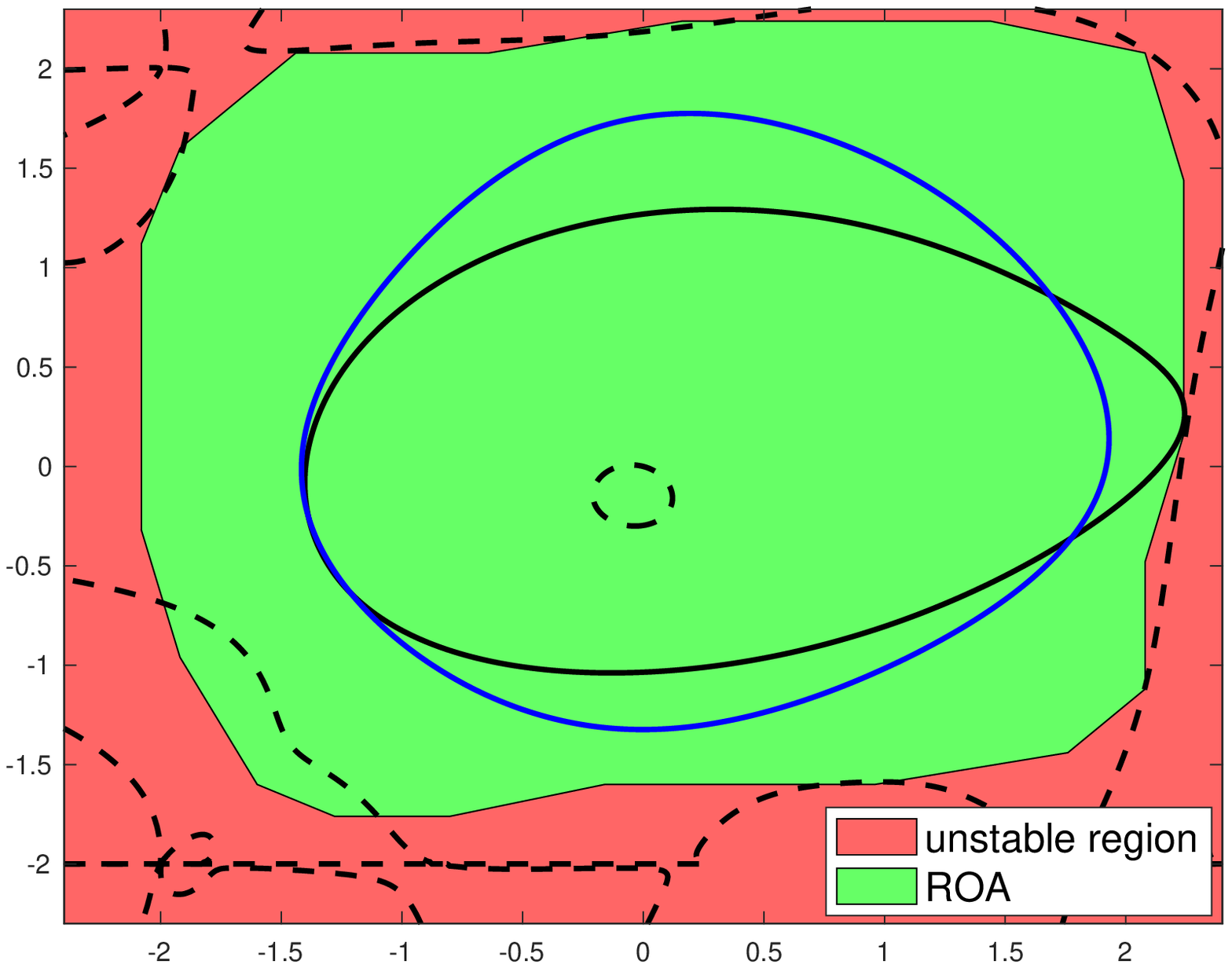}
                \vspace{-10pt}
                \caption{A two machine system}
                \label{fig:Model B}
        \end{subfigure}%
        \begin{subfigure}[b]{0.25\textwidth}
         \includegraphics[width=\linewidth , trim = {1.75cm 1cm 1.75cm 1cm}, clip]{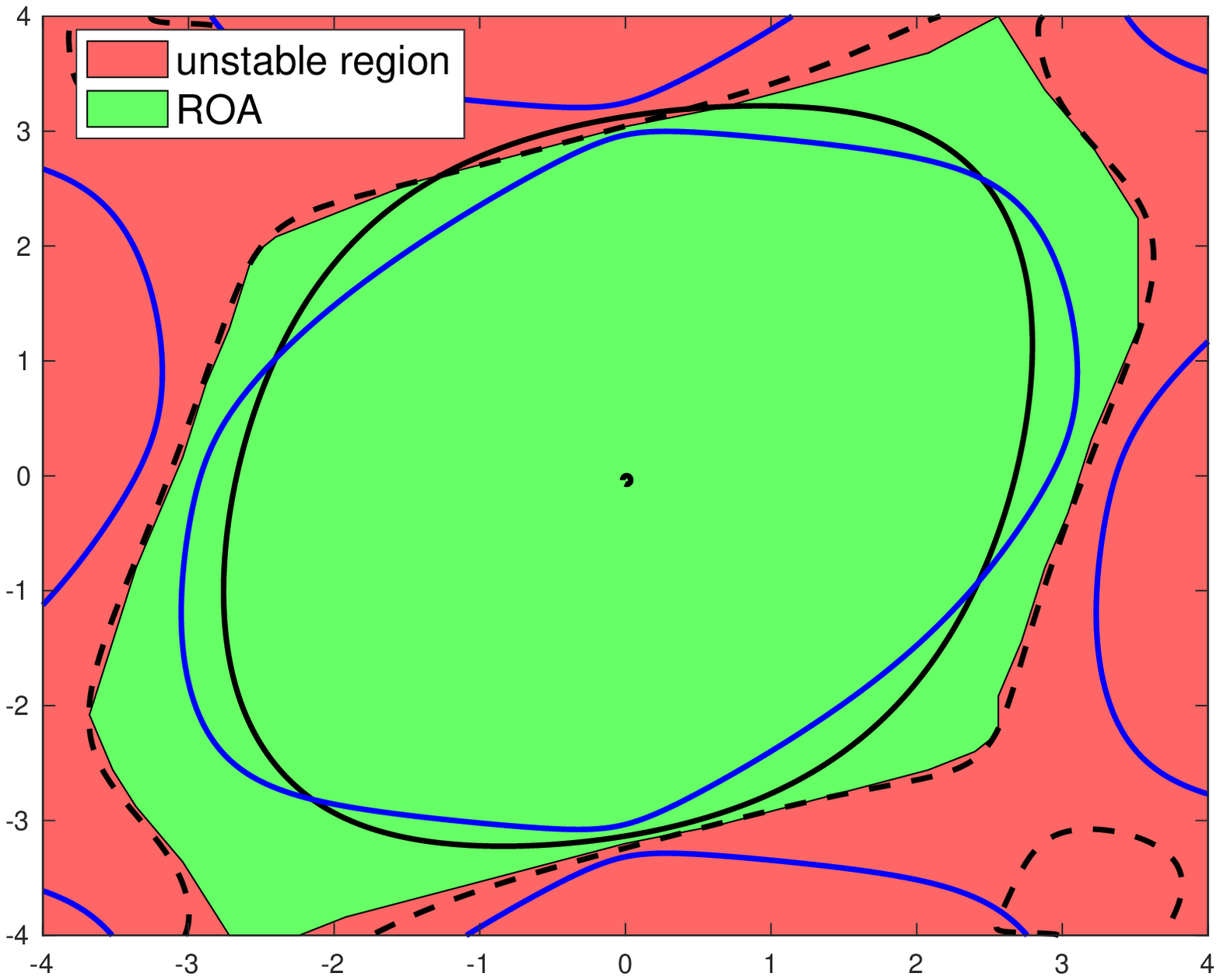}
          \vspace{-10pt}
                \caption{IEEE three machine system}
                \label{fig:Model A}
        \end{subfigure}
        \vspace{-20pt}
        \caption{\footnotesize Our estimations of the ROA of several systems, given by the union of the region contained inside the black ($V_1$) and blue ($V_2$) curves. The dotted line corresponds to the boundary of the $0$-sublevel set of the derivative of $V_1$ along the solution map.}\label{fig:ROAs}
        \vspace{-20pt}
\end{figure*}

\vspace{-0.2cm}
\section{Numerical Examples}
\vspace{-0.2cm}
We next present several numerical examples demonstrating how Thm.~\ref{thm: two step LF} can be used to yield accurate ROA approximations of nonlinear ODEs. For each numerical example we compute $V_1$ and $V_2$, from Thm.~\ref{thm: two step LF}, by respectively fitting Bernstein polynomials to the converse LF given in Eq.~\eqref{eq: Zubov LF} (using Eqs.~\eqref{eq: Bernstein}, \eqref{eq: fitting on rectable} and~\eqref{eqn: traj data value of LF}) and either using an energy function or an analytical LF (found previously in the literature).  

To demonstrate the accuracy of our approximations of the ROA we will carry out extensive Monte Carlo simulations of the solution map to estimate the stable and unstable regions in each figure. Although this Monte Carlo method can estimate the ROA well it does not account for simulation error or provide a LF and hence cannot provide a certified ROA inner approximation.

\subsection{Estimating the ROA of the Van der Pol system} \label{subsec: Van Poll}
\vspace{-0.1cm}
Consider the reverse time Van der Pol oscillator defined by the vector field:
\vspace{-0.4cm}
\begin{align} \label{eqn: van der pol ode}
f_{VDP}(x)=\begin{bmatrix} -x_2 \\ x_1-x_2(1-x_1^2) \end{bmatrix}
\end{align}
The following quadratic LF was found in~\cite{khalil2002nonlinear}: $V_{VDP}(x)=x^\top P x$, where $P=\begin{bmatrix} 1.5 & -0.5 \\ -0.5 & 1 \end{bmatrix}$. 

We now fit the converse LF, $V^*_{\lambda,\beta}$ (given in Eq.~\eqref{eq: Zubov LF}), for $\lambda=3$ and $\beta=1$ by a degree $75$ Bernstein polynomial over the set $D=[-2, 2] \times [-2.7, 2.7]$. In Fig.~\ref{fig:deg_75_van_poll} we have plotted our estimation of the ROA achieved using this fitted Bernstein polynomial as $V_1$ and $V_{VDP}$ as $V_2$ in Thm.~\ref{thm: two step LF}. The black and blue curves correspond to the boundaries of $\{x \in D: V_1(x)< 0.74\}$ and $\{x \in D: V_2(x)< 2.25\}$ respectively.

\vspace{-0.1cm}
\subsection{Estimating the ROA of the Single Machine Infinite Bus (SMIB) system} \label{subsec: SMIB}
The SMIB system can be modeled by an ODE~\eqref{eqn: ODE} with the following vector field:
\vspace{-0.2cm}
\[f_{SMIB}(x) \hspace{-0.07cm} := \hspace{-0.07cm} \begin{bmatrix}x_2\\ (1/2H)({P_m \hspace{-0.05cm} - \hspace{-0.05cm} \frac{E'E_{B}}{X_{eq}}sin(x_1 \hspace{-0.05cm} + \hspace{-0.05cm} \delta_{ep}) \hspace{-0.05cm} - \hspace{-0.05cm} {D}x_2}) \end{bmatrix} \hspace{-0.07cm} ,\]
where $H=0.0106 s^2/rad$, $X_{eq}=0.28 pu$, $P_m=1 pu$, $E_{B}=1 pu$, $E'=1.21 pu$, $D=0.03$.   
It is shown in~\cite{chowPowerBook20} that the energy of the SMIB system can be expressed as the following function
\vspace{-0.4cm}
    \begin{align*}
    & V_{E}(x) \hspace{-0.1cm} := \hspace{-0.1cm}
       -P_m x_1 \hspace{-0.1cm} + \hspace{-0.1cm} \frac{E' E_B}{X_{eq}} (\cos(\delta_{ep}) \hspace{-0.1cm}-\hspace{-0.1cm} \cos(x_1 \hspace{-0.1cm}+\hspace{-0.1cm} \delta_{ep})) \hspace{-0.1cm}+\hspace{-0.1cm} H x_2^{2}.
\end{align*}
By graphically solving Opt.~\eqref{opt: 1 gam_3} for $V_2=V_E$ and $f=f_{SMIB}$, we find $\gamma^*_3=5.722$. The boundary of $\{x \in \R^n: V_E(x) \le \gamma^*_3\}$ is given as the blue curve in Fig.~\ref{fig:BernsteinROAsublvl} 

We find a candidate $V_1$ function of Thm.~\ref{thm: two step LF} by fitting a degree $60$ Bernstein polynomial to the converse LF, $V^*_{\lambda,\beta}$ (given in Eq.~\eqref{eq: Zubov LF}), for $\lambda=10$ and $\beta=1$ over the set $D=[-0.75\pi, \pi] \times [-30, 30]$.

 We have plotted the boundary of $\{y \in D: V_1(y) \le 0.68\}$ as the black curve in Fig.~\ref{fig:BernsteinROAsublvl}. These sublevel sets are such that Thm.~\ref{thm: two step LF} shows $\{y \in D: V_1(y) \le 0.68\} \cup \{y \in D: V_2(y) \le 5.722\}  \subseteq ROA_{f_{SMIB}}$, providing an inner approximation of $ROA_{f_{SMIB}}$. Moreover, we have plotted the boundary of the set $\{y \in D: \nabla V_1(y)^\top f_{SMIB}(y) \le 0\}$ as the dotted black line. Showing $\nabla V_1(y)^\top f_{SMIB}(y)$ may not be negative around a neighborhood of the origin as expected from Sec.~\ref{sec: Improving ROA Estimation with Approximated Converse Lyapunov Functions}.

Using both $V_{1}$ and $V_2$ we have improved the inner approximation of $ROA_{f_{SMIB}}$ as compared to the approximation of yielded by the energy function, $V_2$, alone. 


\subsection{Estimating the ROA of a two-machine versus infinite bus system}
Consider the following four dimensional power system model found in~\cite{bretas_alberto_2003,anghel2013algorithmic} that represents a two-machine versus infinite bus system which can be modeled by an ODE~\eqref{eqn: ODE} with the following vector field:
\vspace{-0.3cm}
\begin{align} \label{ODE: TMIB}
&f_{TMIB}(x)=\begin{bmatrix} f_1(x), f_2(x),f_3(x),f_4(x) \end{bmatrix}^\top,
\end{align} 
\noindent where $f_1(x)=x_2$, $f_2(x)=33.5849-1.8868\cos(x_1-x_3) -5.283 \cos(x_1)-16.9811\sin(x_1-x_3)-59.6226\sin(x_1)-1.8868x_2$, $f_3(x)=x_4$ and $f_4(x)=11.3924\sin(x_1-x_3)-1.2658\cos(x_1-x_3)-3.2278\cos(x_3)-1.2658x_4-99.3671\sin(x_3)+48.481$. 
The point $x_{SEP}:=[0.468, 0, 0.463, 0]^\top \in \R^4$ is a stable equilibrium point of Eq.~\eqref{ODE: TMIB}. Using a change of variables $\tilde{x}=x-x_{SEP}$ we map the equilibrium point to the origin.

We now fit the converse LF, $V^*_{\lambda,\beta}$ (given in Eq.~\eqref{eq: Zubov LF}), for $\lambda=1$ and $\beta=1$ by a degree $20$ Bernstein polynomial over the set $[-2, 2] \times [-3, 3] \times [-2, 2] \times [-3, 3]$. Fig.~\ref{fig:Model B} shows a slice of the state space when $x_2=x_4=0$ depicting the sublevel set of this Bernstein polynomial along with the ROA estimation found in~\cite{anghel2013algorithmic}. By using Thm.~\ref{thm: two step LF} we are able to certify an improved ROA estimation (the union of the sublevel sets in Fig.~\ref{fig:Model B}).

\vspace{-0.2cm}
\subsection{Estimating the ROA of a three-machine system} \label{subsec: model A} \vspace{-0.15cm}
Consider the following four dimensional power system model found in~\cite{chiang2011direct} (Page 144) that represents a three-machine system with machine number 3 as swing bus (reference of the system) and can be modeled by an ODE~\eqref{eqn: ODE} with the following vector field:
\vspace{-0.3cm}
\begin{align} \label{ODE: 3MS}
&f_{3MS}(x) \\ \nonumber
&=\begin{bmatrix}
x_2 \\
-\sin(x_1)-0.5\sin(x_1-x_3)-0.4x_2\\
 x_4\\
-0.5 \sin(x_3) -0.5\sin(x_3-x_1)-0.5x_4+0.05 \end{bmatrix}
\end{align}
The point $x_{SEP}:=[0.02001, 0, 0.06003, 0]^\top \in \R^4$ is a stable equilibrium point of Eq.~\eqref{ODE: 3MS}. Using a change of variables $\tilde{x}=x-x_{SEP}$ we map the equilibrium point to the origin. The ROA of this system has previously been estimated using energy functions in~\cite{chiang2011direct}. A more accurate ROA estimation was found in~\cite{anghel2013algorithmic} using Sum-of-Squares to find a LF. We now use the LF found in~\cite{anghel2013algorithmic} as $V_2$ in Thm.~\ref{thm: two step LF} and compute a $V_1$ by fitting a degree $20$ Bernstein polynomial to the converse LF, $V^*_{\lambda,\beta}$ (given in Eq.~\eqref{eq: Zubov LF}), for $\lambda=1$ and $\beta=1$ over the set $[-4, 4] \times [-0.75, 0.75] \times [-4, 4] \times [-0.75, 0.75]$. Fig.~\ref{fig:Model A} shows a slice of the state space when $x_2=x_4=0$ and depicts the best ROA estimation found in~\cite{anghel2013algorithmic} along with a sublevel set of the resulting fitted Bernstein polynomial. Thm.~\ref{thm: two step LF} can be used to certify the union of these sublevel sets are inside the ROA, providing an improved ROA estimation. 
\vspace{-0.5cm}
\section{Conclusion}
\vspace{-0.2cm}
This work proposes a novel methodology for ROA estimation using an approximated converse Lyapunov function, derived from trajectory data, together with an analytical Lyapunov function. The method yields a certifiable inner ROA estimation. Numerical examples demonstrate that the proposed method is able to expand ROA approximations found using analytical Lyapunov functions derived elsewhere in the literature. This method is not limited to the converse Lyapunov function fitting technique implemented, Bernstein polynomial approximations. Function fitting techniques that are better suited for high dimensional problems will be explored in future work.
\vspace{-0.1cm}
\bibliographystyle{./bibliography/IEEEtran}
\vspace{-0.2cm}
\bibliography{./bibliography/IEEEabrv,./bibliography/IEEEexample}

\end{document}